\documentclass[journal]{IEEEtran}

\usepackage{cite}
\usepackage{graphicx} 
\usepackage{epsf}
\usepackage{subfigure}
\usepackage{amsmath}
\usepackage{comment}
\usepackage{amssymb}
\usepackage{array}
\usepackage{setspace}
\usepackage{amsthm,amssymb}
\usepackage[ruled,lined]{algorithm2e}
\usepackage{multirow} 
\usepackage{tabularx}
\usepackage{xfrac}
\usepackage{breqn}
\usepackage{bbm}
\usepackage{enumerate}
\usepackage{algorithmic}
\usepackage{dsfont}
\usepackage{float}
\usepackage{mathtools}
\usepackage{color}

\newtheorem{prop}{Proposition}


\textwidth 7.3in  
\oddsidemargin -0.4in
\topmargin -0.6 in
\textheight 9.6 in 

\graphicspath{{fig/}}


\begin{document}

\title{\huge Downlink Rate Maximization with Reconfigurable Intelligent Surface Assisted Full-Duplex Transmissions}


\author{ \IEEEauthorblockN{Li-Hsiang Shen$^\dagger$, Chia-Jou Ku, and Kai-Ten Feng}\\
\IEEEauthorblockA{$^\dagger$Department of Communication Engineering, National Central University, Taoyuan 30021, Taiwan\\
Department of Electronics and Electrical Engineering, \\ National Yang Ming Chiao Tung University, Hsinchu 300093, Taiwan\\
gp3xu4vu6@gmail.com, chiajouku.ee09@nycu.edu.tw, and ktfeng@nycu.edu.tw}}

\maketitle

\begin{abstract}
Reconfigurable intelligent surfaces (RIS) as an effective technique for intelligently manipulating channel paths through reflection to serve desired users. Full-duplex (FD) systems, enabling simultaneous transmission and reception from a base station (BS), offer the theoretical advantage of doubled spectrum efficiency. However, the presence of strong self-interference (SI) in FD systems significantly degrades performance, which can be mitigated by leveraging the capabilities of RIS. In this work, we consider joint BS and RIS beamforming for maximizing the downlink (DL) transmission rate while guaranteeing uplink (UL) rate requirement. We propose an FD-RIS beamforming (FRIS) scheme by adopting penalty convex-concave programming. Simulation results demonstrate the UL/DL rate improvements achieved by considering various levels of imperfect CSI. The proposed FRIS scheme validates their effectiveness across different RIS deployments and RIS/BS configurations. FRIS has achieved the highest rate compared to the other approximation method, conventional beamforming techniques, HD systems, and deployment without RIS.
\end{abstract}

\begin{IEEEkeywords}
Reconfigurable intelligent surface, full-duplex, beamforming, rate maximization.
\end{IEEEkeywords}



\section{Introduction}
Reconfigurable intelligent surface (RIS) \cite{RIS_intro_princi, acm} introduces a promising hardware to assist advanced network transmissions. An RIS is composed of passive elements constructed from massive meta-material based structures \cite{RIS_intro_hardware}. Each of these reflecting element can adjust its phase, enabling the creation of diverse multipath profiles and redirection of impinging signals towards the desired direction. Compared to traditional beamforming \cite{BM_MRC}, RIS has its potential advantage to reckon with insufficient degree of freedom of channels. Benefited from its high energy/spectral efficiency as well as cost-effective deployment, RIS contribute significantly to the improved signal quality and extended coverage. The optimization of RIS phase for maximizing energy efficiency is addressed in \cite{RIS_energy_efficiency}. Additionally, the work presented in \cite{RIS_power} aims for minimizing the overall power consumption of the base station (BS) by dynamically adjusting the configuration of RIS. In \cite{ris_secret}, RIS is dedicated to the design and optimization specifically for enhancing the physical layer security, preventing information leakage to eavesdroppers. In \cite{BM_HD}, a downlink based transmission is improved with the employment of RIS. In \cite{SDR}, an advanced power control for broadcasting is designed with the aid of RIS deployment.

Moreover, employing half-duplex (HD) transmission \cite{DLMIMO} has the potential to induce the inefficient utilization of limited spectrum resources. As a remedy, full-duplex (FD) communication systems have emerged as a solution, which allows simultaneous transmission and reception of both DL and UL signals on the same frequency band. FD can therefore theoretically double the spectrum efficiency compared to HD systems. Nevertheless, the substantial challenge in FD systems arises from the presence of strong self-interference (SI), which can substantially degrade system performance. A study presents a methodology by designing based on oblique projection to mitigate SI in null space \cite{FD_oblique, FD_queue}. However, this technique requires stringent assumptions about the BS owing to the limited diversity of antennas it is equipped with. Hence, the integration of RIS and FD transmission offers an alternative solution, as RIS can effectively mitigate both SI and inter-user interferences between uplink (UL) and downlink (DL) transmissions by employing destructive signals along reflective paths. In \cite{ris_fd_relay}, relaying FD signals are accomplished with the assistance of RIS. In \cite{ris_fd_ra}, resource allocation for spectral-efficient RIS-FD is conceived in non-orthogonal multiple access. In \cite{ris_fd_power}, RIS-enabled power minimization is considered with RIS possessing both reflective and refractive functionalities. The work in \cite{amy} has derived the closed-form presenting the relationship between the number of receiving antennas and that of RIS elements but under no BS beamforming. In this study, we explore the arising potential when jointly considering BS and RIS beamforming in FD systems. The contributions of this work can be summarized as follows.
\begin{itemize}
\item
We formulate a DL rate maximization problem while guaranteeing UL quality-of-service (QoS), i.e., minimum UL rate requirement for RIS-FD systems. We consider the constraints of BS beamforming as well as RIS phase-shift configuration.

\item
We propose an FD-RIS beamforming (FRIS) scheme by employing the Lagrangian dual and Dinkelbach transformation to address logarithmic and fractional functions. The original problem is decoupled into the two subproblems, which are  solved by successive convex approximation (SCA) and penalty convex-concave programming (PCCP), respectively.
 
\item 
Simulation results demonstrate the individual UL/DL rates with different QoS levels. The optimal deployment of the RIS with different numbers of RIS elements is also shown. FRIS outperforms the existing methods in terms of existing approximation technique, deployment without RIS, conventional beamforming, and half-duplex systems.
\end{itemize}

\section{System Model and Problem Formulation} \label{SYS_MOD}

\subsection{Signal Model}

We consider an FD-BS and an RIS, serving $M$ DL user equipments (UEs) as well as $N$ UL UEs, with respective sets as $\mathcal{M}=\{1,...,M\}$ and $\mathcal{N}=\{1,...,N\}$. Note that UEs equipped with a single antenna operate in HD mode, i.e., either UL or DL. The BS is equipped with $\mathcal{N}_t=\{1,...,N_t\}$ DL transmit antenna set (Tx) and $\mathcal{N}_r=\{1,...,N_r\}$ receiving UL antenna set (Rx), whereas RIS has $\mathcal{K}=\{1,...,K\}$ reflecting element set. The UL channels of UL UE-BS, UL UE-RIS, and RIS-BS are defined as ${\rm\mathbf{U}} \in \mathbb{C}^{N_r \times N}$, ${\rm\mathbf{U}}_1 \in \mathbb{C}^{K \times N}$ and ${\rm\mathbf{U}}_2 \in \mathbb{C}^{N_r \times K}$, respectively. The DL channels of BS-DL UE, BS-RIS, and RIS-DL UE are denoted as ${\rm\mathbf{D}} \in \mathbb{C}^{M \times N_t}$, ${\rm\mathbf{D}}_1 \in \mathbb{C}^{K \times N_t}$ and ${\rm\mathbf{D}}_2 \in \mathbb{C}^{M \times K}$, respectively. Notation ${\rm\mathbf{S}} \in \mathbb{C}^{N_r \times N_t}$ denotes the SI channel from BS Tx to itself Rx, whilst ${\rm\mathbf{V}} \in \mathbb{C}^{M \times N}$ presents the co-channel interference (CCI) from UL to DL UEs. As for RIS, we define RIS reflection matrix as ${\boldsymbol{\Theta}}={\rm{diag}} ( [\beta_{k}e^{j\theta_k}] ) \in \mathbb{C}^{K \times K}$, where $\beta_k$ is the RIS amplitude and $e^{j\theta_k}$ is the RIS phase where $\theta_k \in [ 0,2\pi) \forall k\in \mathcal{K}$. We consider $p_D$ and $p_U$ as transmit power of DL BS and UL UE, respectively. We define ${\rm\mathbf{W}} = [{\rm\mathbf{w}}_1,...,{\rm\mathbf{w}}_M] \in \mathbb{C}^{N_t \times M} $ as the precoding matrix for BS beamforming. As for signals, ${\rm\mathbf{x}}_D = [x_{D,1},...,x_{D,M}]^{T} \in \mathbb{C}^{M} $ and ${\rm\mathbf{x}}_U=[x_{U,1},...,x_{U,N}]^{T} \in \mathbb{C}^{N}$ are respectively defined as the DL and UL user data streams. Therefore, the received DL signal of DL UE $m$ is given by
\begingroup
\allowdisplaybreaks
\begin{align}\label{eqn:DL y_m}
    y_m &= \sqrt{p_D}\left( {\rm\mathbf{D}}_m + {\rm\mathbf{D}}_{2,m}{\boldsymbol{\Theta}}{\rm\mathbf{D}}_{1} \right){\rm\mathbf{w}}_m x_{D,m}\notag \\
    &+ \sqrt{p_U}({\rm\mathbf{V}}_m +{\rm\mathbf{D}}_{2,m}{\boldsymbol{\Theta}}{\rm\mathbf{U}}_{1}){\rm\mathbf{x}}_{U} \notag \\
    & +\! \sum_{m' \in \mathcal{M}\backslash m}   \sqrt{p_D}({\rm\mathbf{D}}_m + {\rm\mathbf{D}}_{2,m}{\boldsymbol{\Theta}}{\rm\mathbf{D}}_{1}){\rm\mathbf{w}}_{m'} x _{D,m^{'}} + n_m.
\end{align}
\endgroup
In $\eqref{eqn:DL y_m}$, first term represents downlink signal, second one indicates CCI, and third equation stands for inter-DL UE interference. Notations $\mathbf{D}_m \in \mathbb{C}^{1 \times N_t}$ and $\mathbf{D}_{2,m}\in \mathbb{C}^{1 \times K}$ denote the $m$-th row vector in $\mathbf{D}$ and $\mathbf{D}_{2}$. Also, ${\rm\mathbf{V}}_m\in \mathbb{C}^{1 \times N}$ indicates CCI from UL to DL UE $m$. $T$ is the transpose operation. Notation $n_m \sim \mathcal{CN}(0,\sigma_m^2)$ is the additive white Gaussian noise (AWGN) with zero mean and variance $\sigma_m^2$. Moreover, the received UL signal at BS Rx is expressed as
\begin{align}\label{eqn:UL y_U}
    {\rm\mathbf{y}}_U \!=\! \sqrt{p_U}({\rm\mathbf{U}}+{\rm\mathbf{U}}_2 \boldsymbol{\Theta} {\rm\mathbf{U}}_1){\rm\mathbf{x}}_U \!+\! \sqrt{p_D}({\rm\mathbf{S}}+{\rm\mathbf{U}}_2 \boldsymbol{\Theta} {\rm\mathbf{D}}_1){\rm\mathbf{W}}{\rm\mathbf{x}}_D \!+\! {\rm\mathbf{n}}_U,
\end{align}
where ${\rm\mathbf{n}}_U \sim \mathcal{CN}({\rm\mathbf{0}}, \sigma_U^2 {\rm\mathbf{e}}_{N_r})$ is AWGN noise. ${\rm\mathbf{0}}$ is zero-mean vector with length $N_r$, $\sigma_U^2$ indicate UL noise power, and ${\rm\mathbf{e}}_{N_r}$ is the unit vector with length $N_r$. As shown in the second term of $\eqref{eqn:UL y_U}$, SI comes solely from BS Tx to its Rx. Compared to non-RIS network, we can observe additional reflective interferences in both CCI and SI. Furthermore, $\mathbf{\Theta}$ is coupled between the cascaded channels, which can be solved by \cite{replace}. For brevity, we only reformulate ${\rm\mathbf{D}}_{2,m}{\boldsymbol{\Theta}}{\rm\mathbf{D}}_{1} = \boldsymbol{\theta} {\rm{diag}}({\rm\mathbf{D}}_{2,m}) {\rm\mathbf{D}}_{1} $ with $\boldsymbol{\theta}$ as vectorized RIS configuration; while the others are omitted here. Accordingly, the signal-to-interference-plus-noise ratio (SINR) of DL UE $m$ and UL BS Rx for UL UE $n$ are respectively expressed as
\begingroup
\allowdisplaybreaks
\begin{align}
    \text{SINR}_{D,m}
 	&=
 	\frac{D_{S,m}({\boldsymbol{\Theta}}, {\rm\mathbf{W}})}{ D_{C,m}({\boldsymbol{\Theta}}) + D_{I,m}({\boldsymbol{\Theta}}, {\rm\mathbf{W}}) + \sigma^2},\label{eqn: DL_R} \\
    \text{SINR}_{U,n} 
    &= 
    \frac{U_{S,n}({\boldsymbol{\Theta}})}{U_{I,n}({\boldsymbol{\Theta}}, {\rm\mathbf{W}}) + \sigma^2}, \label{eqn: UL_R}
\end{align}
\endgroup
where $\overline{x}_{D,m} = \sqrt{p_D}x_{D,m}, \forall m\in\mathcal{M}$, $\overline{x}_{U,n} = \sqrt{p_U}x_{U,n}, \forall n\in\mathcal{N}$, $\overline{\rm\mathbf{x}}_{D} = \sqrt{p_D} {\rm\mathbf{x}}_{D}$, and $\overline{\rm\mathbf{x}}_{U} = \sqrt{p_U} {\rm\mathbf{x}}_{U}$. For brevity, we omit the definition of respective terms in $\eqref{eqn: DL_R}$ and $\eqref{eqn: UL_R}$, as those parameters can be formed by applying a norm operation to signals or interferences. Noise power for all UEs is denoted as $\sigma^2$. The DL/UL user data rate can be respectively attained as
\begingroup
\allowdisplaybreaks
\begin{align}
	R_{D,m}({\boldsymbol{\Theta}}, {\rm\mathbf{W}})&=  \log_2({1+\text{SINR}_{D,m}}), \label{rate_ob1}\\
	R_{U,n}({\boldsymbol{\Theta}}, {\rm\mathbf{W}}) &=   \log_2({1+\text{SINR}_{U,n}}). \label{rate_ob2}
\end{align}
\endgroup

\subsection{Problem Formulation}
	We aim for maximizing the DL rate while guaranteeing UL rate requirement, which is represented by
\begingroup
\allowdisplaybreaks
\begin{subequations} \label{P1}
\begin{align}
\mathop{\max}_{{\boldsymbol{\Theta}},{\rm\mathbf{W}}} \ & \sum_{m\in \mathcal{M}} R_{D,m}({\boldsymbol{\Theta}}, {\rm\mathbf{W}}) \label{P1:obj_func}\\
\text{s.t. } 
& \sum_{n \in \mathcal{N}} R_{U,n}({\boldsymbol{\Theta}}, {\rm\mathbf{W}}) \geq t_{th,U}, \label{P1:QoS_constraint}\\
& \left| e^{j\theta_k} \right|^2 = 1, \quad \forall k\in \mathcal{K}, \label{P1:phase_constraint}\\
&\sum_{m\in \mathcal{M}} {\rm\mathbf{w}}_m^{H} {\rm\mathbf{w}}_m \leq P_{\text{max}}.\label{P1:power_constraint}
\end{align}
\end{subequations}
\endgroup
Constraint \eqref{P1:QoS_constraint} guarantees the UL rate above a threshold $t_{th,U}$. Constraint \eqref{P1:phase_constraint} stands for RIS configuration, whilst \eqref{P1:power_constraint} presents the maximum BS transmit power $P_{\text{max}}$. $H$ denotes the Hermitian operation of a matrix. We can know that problem $\eqref{P1}$ is non-convex and non-linear owing to fractional functions, logarithmic expressions and the equality constraint of RIS, which will be respectively solved below.

\section{Proposed FRIS Scheme}

\subsection{Logarithmic and Fractional Transformation}\label{subsection:1}

We can observe from $\eqref{rate_ob1}$ and $\eqref{rate_ob2}$ that the objective function performs summation of logarithmic functions with each having fractional expression of SINR. To tackle the issues, we employ Lagrangian dual transform \cite{Lagran_Dual_Transform} and the Dinkelbach method \cite{Dinkelbach}. Firstly, we transform the constraint of $\eqref{P1:QoS_constraint}$ due to multiple summation operations. Based on Jensen's inequality, we have a lower  bound of total UL rate bound as $\sum_{n\in \mathcal{N}} R_{U,n}({\boldsymbol{\Theta}}, {\rm\mathbf{W}}) \geq R_{U}({\boldsymbol{\Theta}}, {\rm\mathbf{W}}) =  \log_2({1+\text{SINR}_{U}})$, where $\text{SINR}_{U}
    =\frac{\lVert ({\rm\mathbf{U}}+{\rm\mathbf{U}}_t {\boldsymbol{\Theta}}^{(U)})\overline{{\rm\mathbf{x}}}_U \lVert^2}{\lVert ({\rm\mathbf{S}}+{\rm\mathbf{S}}_t {\boldsymbol{\Theta}}^{(U)}){\rm\mathbf{W}}\overline{{\rm\mathbf{x}}}_D\lVert^2 + \sigma^2}
    \triangleq \frac{U_{S}({\boldsymbol{\Theta}})}{U_{I}({\boldsymbol{\Theta}}, {\rm\mathbf{W}}) + \sigma^2 }$ is the equivalent expression of total SINR of UL UEs.
Therefore, the transformed problem of $\eqref{P1}$ is given by
\begin{subequations} \label{P1_1}
\begin{align}
\mathop{\max}_{{\boldsymbol{\Theta}},{\rm\mathbf{W}}} \ & \sum_{m\in \mathcal{M}} R_{D,m}({\boldsymbol{\Theta}}, {\rm\mathbf{W}}) \label{P1_1:obj_func0}\\
\text{s.t. } 
& R_{U}({\boldsymbol{\Theta}}, {\rm\mathbf{W}}) \geq t_{th,U}, \label{P1_1:QoS_constraint} \quad \eqref{P1:phase_constraint}, \eqref{P1:power_constraint}.
\end{align}
\end{subequations}
In the following proposition, we design to remove the logarithmic function in $\eqref{P1_1:obj_func0}$.

\begin{prop} \label{prop0}
    The sum-of-logarithm problem $\eqref{P1_1}$ is equivalent to solving
\begin{subequations} \label{P1_3}
    \begin{align}
    & \mathop{\max}_{{\boldsymbol{\Theta}},{\rm\mathbf{W}}} \  \overline{F}_{D}({\boldsymbol{\Theta}}, {\rm\mathbf{W}}) 
 \notag   \\
    &= \sum_{m\in\mathcal{M}} \frac{(1+r_{m}^{*})D_{S,m}({\boldsymbol{\Theta}}, {\rm\mathbf{W}})}{D_{S,m}({\boldsymbol{\Theta}}, {\rm\mathbf{W}}) \!+\! D_{I,m}({\boldsymbol{\Theta}}, {\rm\mathbf{W}}) \!+\! D_{C,m}({\boldsymbol{\Theta}}) \!+\! \sigma^2} \label{P1_1:obj_func}\\
    & \qquad \text{\rm s.t. } 
    \eqref{P1_1:QoS_constraint},
    \end{align}
\end{subequations}
where ${r}_m^{*} = \frac{D_{S,m}({\boldsymbol{\Theta}}, {\rm\mathbf{W}})}{ D_{I,m}({\boldsymbol{\Theta}}, {\rm\mathbf{W}}) + D_{C,m}({\boldsymbol{\Theta}}) + \sigma^2}$ is auxiliary term corresponding to the optimal DL SINR.
\end{prop}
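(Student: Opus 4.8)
The plan is to recognize Proposition~\ref{prop0} as an instance of the Lagrangian dual transform applied to a sum of logarithms, and to verify the equivalence by a two-step argument that alternates between the auxiliary variables $\{r_m\}$ and the original variables $(\boldsymbol{\Theta},\mathbf{W})$. First I would introduce, for each $m\in\mathcal{M}$, an auxiliary variable $r_m\ge 0$ and consider the augmented objective
\begin{align}
f_D(\boldsymbol{\Theta},\mathbf{W},\{r_m\}) = \sum_{m\in\mathcal{M}}\Bigl[\log_2(1+r_m) - r_m + \frac{(1+r_m)\,D_{S,m}(\boldsymbol{\Theta},\mathbf{W})}{D_{S,m}(\boldsymbol{\Theta},\mathbf{W})+D_{I,m}(\boldsymbol{\Theta},\mathbf{W})+D_{C,m}(\boldsymbol{\Theta})+\sigma^2}\Bigr],
\end{align}
which is the standard Lagrangian-dual-transform surrogate for $\sum_m\log_2(1+\mathrm{SINR}_{D,m})$. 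The claim to establish is that maximizing $f_D$ jointly over $(\boldsymbol{\Theta},\mathbf{W},\{r_m\})$ subject to \eqref{P1_1:QoS_constraint} is equivalent to \eqref{P1_1}, and that after optimizing out $\{r_m\}$ one recovers exactly \eqref{P1_3}.

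The key steps, in order, are as follows. \textbf{Step 1 (inner maximization over $r_m$).} For fixed $(\boldsymbol{\Theta},\mathbf{W})$, the objective $f_D$ is concave and differentiable in each $r_m$ separately (the map $r\mapsto \log_2(1+r)-r$ is concave, and the fractional term is affine in $r_m$ for fixed channel terms). Setting $\partial f_D/\partial r_m = 0$ and solving gives the closed-form maximizer $r_m^\star = D_{S,m}/(D_{I,m}+D_{C,m}+\sigma^2) = \mathrm{SINR}_{D,m}$, i.e.\ exactly the $r_m^\star$ stated in the proposition; a second-derivative check confirms it is the global maximizer over $r_m\ge 0$. \textbf{Step 2 (back-substitution).} Plugging $r_m=r_m^\star$ into $f_D$, the terms $\log_2(1+r_m^\star)-r_m^\star$ combine with the fractional term; using $1+r_m^\star = (D_{S,m}+D_{I,m}+D_{C,m}+\sigma^2)/(D_{I,m}+D_{C,m}+\sigma^2)$ one checks that the fractional term equals $1+r_m^\star$ times $D_{S,m}/(D_{S,m}+D_{I,m}+D_{C,m}+\sigma^2)$, which after simplification cancels the $-r_m^\star$ contribution and leaves $\sum_m\log_2(1+r_m^\star) = \sum_m\log_2(1+\mathrm{SINR}_{D,m}) = \sum_m R_{D,m}$. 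Hence $\max_{\{r_m\}} f_D = \sum_m R_{D,m}$, so the two problems share the same optimal value and, since the $r_m$-maximization is exact and uncoupled from the constraint \eqref{P1_1:QoS_constraint}, the same optimal $(\boldsymbol{\Theta},\mathbf{W})$. \textbf{Step 3 (dropping the $r_m$-only terms from the stated objective).} Because $\{r_m^\star\}$ depends on $(\boldsymbol{\Theta},\mathbf{W})$, when we re-optimize over $(\boldsymbol{\Theta},\mathbf{W})$ with $r_m$ held at its previous value, the constant $\sum_m[\log_2(1+r_m)-r_m]$ does not affect the argmax, which is why \eqref{P1_1:obj_func} retains only the fractional part $\overline{F}_D$; I would note that this is precisely the block-coordinate-ascent structure that makes the equivalence "in the sense of alternating optimization" hold and converge monotonically.

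The main obstacle I anticipate is not any single hard inequality but rather stating the equivalence \emph{precisely}: the proposition as written replaces the full objective by $\overline{F}_D$ alone, which is only valid within the alternating-optimization framework where $\{r_m\}$ is updated to $\{r_m^\star\}$ at each outer iteration. I would therefore be careful to (i) present the augmented objective $f_D$ explicitly, (ii) prove the closed-form $r_m^\star$ via first-order optimality and concavity, and (iii) verify the algebraic identity in Step 2 that collapses $f_D|_{r_m=r_m^\star}$ back to $\sum_m R_{D,m}$ — this last identity is the one routine-but-essential computation that underpins the whole claim. A secondary point worth a sentence is that the constraint set \eqref{P1_1:QoS_constraint} is untouched by the transformation since it involves no $r_m$, so feasibility is preserved exactly in both directions.
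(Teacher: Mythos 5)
Your proposal is correct and follows essentially the same route as the paper's proof: introduce the Lagrangian-dual-transform surrogate with auxiliary variables $r_m$, obtain $r_m^{*}=\mathrm{SINR}_{D,m}$ from the first-order condition under concavity in $r_m$, and back-substitute to recover $\sum_m R_{D,m}$. Your Steps 2 and 3 merely make explicit what the paper leaves implicit (the algebraic collapse of the surrogate to the original sum-rate, and the fact that the $r_m$-only terms are constants so that only $\overline{F}_D$ matters when $r_m^{*}$ is fixed from the previous iteration), which is a sharper statement of the same argument rather than a different one.
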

\begin{proof} 
Based on Lagrangian dual transform, we have weighted sum-of-logarithms given by
\begingroup
\allowdisplaybreaks
\begin{align}
    & F_{D}({\boldsymbol{\Theta}}, {\rm\mathbf{W}}, \boldsymbol{r}) = \sum_{m\in \mathcal{M}} \log_{2}(1+r_m) - \sum_{m\in \mathcal{M}} r_m  \notag\\
    & + \sum_{m\in \mathcal{M}} \frac{(1+r_m)D_{s,m}({\boldsymbol{\Theta}}, {\rm\mathbf{W}})}{D_{S,m}({\boldsymbol{\Theta}}, {\rm\mathbf{W}}) + D_{I,m}({\boldsymbol{\Theta}}, {\rm\mathbf{W}}) + D_{C,m}({\boldsymbol{\Theta}}) + \sigma^2}
\end{align}
\endgroup
where $\boldsymbol{r} = [r_1, r_2,...,r_M]$ indicates the auxiliary variable of DL SINR. We can observe that $F_{D}({\boldsymbol{\Theta}}, {\rm\mathbf{W}}, \boldsymbol{r})$ is convex and differentiable function with respect to (w.r.t.) $r^{*}_{m}$, with fixed BS/RIS parameters $\{ \boldsymbol{\Theta}, \mathbf{W} \}$. Therefore, the optimum of $r^{*}_{m}$ can be obtained by taking $\frac{\partial F_{D}({\boldsymbol{\Theta}}, {\rm\mathbf{W}}, \boldsymbol{r})}{\partial r_m}=0$. Substituting $r^{*}_m$ in $F_{D}({\boldsymbol{\Theta}}, {\rm\mathbf{W}}, \boldsymbol{r})$ yields $\eqref{P1_1:obj_func}$. This completes the proof.
\end{proof}

Note that $r_m^*$ can be set as the optimum of DL SINR acquired at previous iteration. We employ the Dinkelbach method in $\eqref{P1_1:obj_func}$, which converts the fractional expression into an affine function as
\begingroup
\allowdisplaybreaks
\begin{align}
	\widetilde{F}_{D}({\boldsymbol{\Theta}}, {\rm\mathbf{W}}) &= \sum_{m\in \mathcal{M}} ( 1 + r_{m}^{*} )D_{S,m}({\boldsymbol{\Theta}}, {\rm\mathbf{W}}) 
	- t_m^{*}\Big( D_{S,m}({\boldsymbol{\Theta}}, {\rm\mathbf{W}}) \notag\\
	& + D_{I,m}({\boldsymbol{\Theta}}, {\rm\mathbf{W}}) + D_{C,m}({\boldsymbol{\Theta}}) + \sigma^2 \Big),
\end{align}
\endgroup
where the optimal solution of the Lagrangian dual in the $i$-th iteration is defined as the optimum obtained in the previous iteration denoted by
\begin{equation}
    \begin{aligned} \label{eqn: obj_ratio_D}
    t_m^{*} = \frac{(1+r_{m}^{*})D_{s,m}({\boldsymbol{\Theta}}^{*(i-1)}, {\rm\mathbf{W}}^{*(i-1)})}{A_1 + A_2 + A_3 + \sigma^2},
    \end{aligned}
\end{equation}
where $A_1 \!=\! D_{S,m}({\boldsymbol{\Theta}}^{*(i-1)}, {\rm\mathbf{W}}^{*(i-1)})$, 
$A_2 \!=\! D_{I,m}({\boldsymbol{\Theta}}^{*(i-1)}, {\rm\mathbf{W}}^{*(i-1)})$ and 
$A_3 = D_{C,m}({\boldsymbol{\Theta}}^{*(i-1)})$. Moreover, UL QoS in $\eqref{P1_1:QoS_constraint}$ can be rewritten as 
\begin{equation}
    \begin{aligned} \label{eqn: UL_QoS}
    \text{SINR}_U \geq \overline{t}_{th,U}
    \ 
    \Rightarrow 
    \
    U_S(\boldsymbol{\Theta}) \geq \overline{t}_{th,U}(U_I(\boldsymbol{\Theta}, {\rm\mathbf{W}})+ \sigma^2),
    \end{aligned}
\end{equation}
where $\overline{t}_{th,U} = 2^{t_{th,U}}-1$. In the last inequality of $\eqref{eqn: UL_QoS}$, the denominator of SINR $\eqref{eqn: obj_ratio_D}$ is moved to the right-hand side. Therefore, the problem $\eqref{P1_3}$ in the $i$-th iteration can be acquired as
\begin{align} \label{P1_4}
    \mathop{\max}_{{\boldsymbol{\Theta}},{\rm\mathbf{W}}} \ &  \widetilde{F}_{D}({\boldsymbol{\Theta}}, {\rm\mathbf{W}})
    \quad \text{ s.t. } 
    \eqref{P1:phase_constraint}, \eqref{P1:power_constraint}, \eqref{eqn: UL_QoS}.
\end{align}
However, we can infer that problem $\eqref{P1_4}$ still performs non-convexity and non-linearity due to coupled variables of $\{ {\boldsymbol{\Theta}},{\rm\mathbf{W}} \}$. Therefore, we decouple problem $\eqref{P1_4}$ into two subproblems by AO algorithm, including active BS beamforming and passive RIS phase shifts, which are elaborated as follows.

\subsection{Alternative Optimization}

\subsubsection{Optimization of Active BS Transmit Beamforming}

While the RIS phase matrix $\boldsymbol{\Theta^{*}}$ is fixed, problem \eqref{P1_4} is reduced to 
\begin{subequations} \label{subP2}
    \begin{align}
    \mathop{\max}_{{\rm\mathbf{W}}} \ &  \widetilde{F}_{D}^{w}({\rm\mathbf{W}}) 
    \!=\! \sum_{m\in \mathcal{M}} (1 \!+\! r_m^*) {D}_{S,m}({\rm\mathbf{W}}) \!-\! t_m^{*}({D}_{t,m}({\rm\mathbf{W}}) \!+\! n_m) \label{subP2:obj_func}\\
    \text{s.t. } 
    & \eqref{P1:power_constraint}, \quad 
    {U}_I({\rm\mathbf{W}}) \leq \xi_U, \label{subP2: UL_constraint}
    \end{align}
\end{subequations}
where ${D}_{S,m}({\rm\mathbf{W}}) \!=\! \lVert ({\rm\mathbf{D}}_m + {\rm\mathbf{D}}_{t_{m}}{\boldsymbol{\Theta}}^{{*(D)}}){\rm\mathbf{w}}_m {\overline{x}_{D,m}} \lVert^2
    \triangleq \lVert {\rm\mathbf{D}}_{w,m} {\rm\mathbf{w}}_m \lVert^2$
    , ${D}_{t,m}({\rm\mathbf{W}}) 
	\!=\! \sum_{m'\in\mathcal{M}}
	\lVert ({\rm\mathbf{D}}_m + {\rm\mathbf{D}}_{t_{m}}{\boldsymbol{\Theta}}^{{*(D)}})
	\\ 
	{\rm\mathbf{w}}_{m'} {\overline{x}}_{D,m^{'}} \lVert^2
    \triangleq \sum_{m'\in\mathcal{M}} \lVert {\rm\mathbf{D}}_{w,m} {\rm\mathbf{w}}_{m'} \lVert^2$
    , and $n_m \!=\! {D}_{C,m}({\boldsymbol{\Theta}}^{*}) + \sigma^2$. By moving beamforming parameters to the left-hand side of inequality in $\eqref{eqn: UL_QoS}$, we have 
    ${U}_I({\rm\mathbf{W}}) 
    = \lVert ({\rm\mathbf{S}}+{\rm\mathbf{S}}_t {\boldsymbol{\Theta}}^{{*(D)}}) {\rm\mathbf{W}} \overline{\rm\mathbf{x}}_D \lVert^2 
    \triangleq \lVert {\rm\mathbf{S}}_{w} \sum_{m\in \mathcal{M}} {\rm\mathbf{w}}_m \lVert^2$
    and $\xi_U = \frac{{U}_S({\boldsymbol{\Theta}}^*) - \overline{t}_{th,U} \sigma^2}{\overline{t}_{th,U}}$. Note that these terms are related to only beamforming parameter $\mathbf{W}$ with the fixed optimal RIS solution of $\boldsymbol{\Theta}^*$.

Now, the DL/UL terms related to beamforming in problem $\eqref{subP2}$ can be further expressed as quadratic form. The DL objective in $\eqref{subP2:obj_func}$ is expressed as
\begin{align}\label{eqn: obj_w_t}
    \widetilde{F}_{D}({\rm\mathbf{W}}) \!=\! \smashoperator[r]{\sum_{m\in\mathcal{M}}} {\rm\mathbf{w}}_m^{H} {\boldsymbol{\Omega}}_{w,m} {\rm\mathbf{w}}_m
    \!-\! t_m^{*} \left( \smashoperator[r]{\sum_{m'\in\mathcal{M}}} {\rm\mathbf{w}}_{m'}^{H} {\boldsymbol{\Omega}}_{w,m} {\rm\mathbf{w}}_{m'} \!+\! n_m \right)
\end{align}
where ${\boldsymbol{\Omega}}_{w,m} = (1+r_m^*){\rm\mathbf{D}}_{w,m}^{H} {\rm\mathbf{D}}_{w,m}$.
As for UL QoS in $\eqref{subP2: UL_constraint}$, we have
\begin{align} \label{eqn: ULcons_w_t}
    \quad {U}_I({\rm\mathbf{W}}) 
    &= \sum_{m\in\mathcal{M}} {\rm\mathbf{w}}_m^{H} {\boldsymbol{\Omega}}_{w,U} \sum_{m\in\mathcal{M}} {\rm\mathbf{w}}_m
    = \sum_{m\in\mathcal{M}} {\rm\mathbf{w}}_m^{H} {\boldsymbol{\Omega}}_{w,U} {\rm\mathbf{w}}_m \notag \\
    & \qquad + 2\sum_{m=1}^{M-1}  \sum_{m'=m+1}^{M} {\rm\mathbf{w}}_m^{H} {\boldsymbol{\Omega}}_{w,U} {\rm\mathbf{w}}_{m'} \leq \xi_U,
\end{align}
where ${\boldsymbol{\Omega}}_{w,U} = {\rm\mathbf{S}}_{w}^{H} {\rm\mathbf{S}}_{w}$. However, the objective $\eqref{eqn: obj_w_t}$ performs a convex-concave function, whilst beamforming variables are coupled in $\eqref{eqn: ULcons_w_t}$. Therefore, SCA is employed to transform the non-convex term to convex one. Based on the first-order Taylor approximation, we can acquire the alternative lower bounds as
\begingroup
\allowdisplaybreaks
\begin{align}
	&{\rm\mathbf{w}}_m^{H}  {\boldsymbol{\Omega}}_{w,m} {\rm\mathbf{w}}_m 
	\geq 
	2\text{Re}\{ ({\rm\mathbf{w}}_m^{(p)}) ^{H} {\boldsymbol{\Omega}}_{w,m} {\rm\mathbf{w}}_m\} - ({\rm\mathbf{w}}_m^{(p)}) ^{H} {\boldsymbol{\Omega}}_{w,m} {\rm\mathbf{w}}_m^{(p)} \label{Taylor_1} \\
	&\smashoperator[r]{\sum_{m=1}^{M-1}} \smashoperator[r]{\sum_{m'=m+1}^{M}} {\rm\mathbf{w}}_m^{H} {\boldsymbol{\Omega}}_{w,U} {\rm\mathbf{w}}_{m'}
	\geq 
	\smashoperator[r]{\sum_{m=1}^{M-1}} \smashoperator[r]{\sum_{m'=m+1}^{M}} 2\text{Re}\{ ({\rm\mathbf{w}}_m^{(p)}) ^{H} {\boldsymbol{\Omega}}_{w,U} {\rm\mathbf{w}}_{m'} \} \notag \\
	& \qquad\qquad\qquad\qquad\qquad\qquad\qquad - ({\rm\mathbf{w}}_m^{(p)}) ^{H} {\boldsymbol{\Omega}}_{w,U}{\rm\mathbf{w}}_{m'}^{(p)} \label{Taylor_2}.
\end{align}
\endgroup
We define $\mathbf{w}_m^{(p)}$ as a constant solution of active beamforming obtained at iteration $p$. Based on $\eqref{eqn: obj_w_t}$ and $\eqref{Taylor_1}$, the objective of $\eqref{subP2:obj_func}$ is transformed into 
\begin{align}\label{eqn: obj_w_Taylor}
    \widetilde{F}_{D,t}( {\rm\mathbf{W}})
    &= \sum_{m\in \mathcal{M}} - t_m^{*} \left( \sum_{m'\in \mathcal{M}} {\rm\mathbf{w}}_{m'}^{H}  {\boldsymbol{\Omega}}_{w,m} {\rm\mathbf{w}}_{m'} \right)  \notag\\
   & \qquad + 2\text{Re}\{({\rm\mathbf{w}}_m^{(p)}) ^{H} {\boldsymbol{\Omega}}_{w,m} {\rm\mathbf{w}}_m\} + c_{w,m}, 
\end{align}
where $c_{w,m} = - ({\rm\mathbf{w}}_m^{(p)}) ^{H} {\boldsymbol{\Omega}}_{w,m} {\rm\mathbf{w}}_m^{(p)} - t_m^{*} {n}_m$ is a constant. According to $\eqref{eqn: ULcons_w_t}$ and $\eqref{Taylor_2}$, the UL QoS constraint of $\eqref{subP2: UL_constraint}$ is converted into
\begin{align}\label{eqn: ULcons_w_Taylor}
    & {U}_I({\rm\mathbf{W}}) \leq \xi_U \Rightarrow \left(\sum_{m\in \mathcal{M}} {\rm\mathbf{w}}_m^{H} {\boldsymbol{\Omega}}_{w,U} {\rm\mathbf{w}}_m\right) \notag \\
    & + 2\left( \sum_{m=1}^{M-1} \smashoperator[r]{\sum_{m'=m+1}^{M}} 2\text{Re}\{ ({\rm\mathbf{w}}_m^{(p)}) ^{H} {\boldsymbol{\Omega}}_{w,U} {\rm\mathbf{w}}_{m'} \}\right) \!+\! c_{w,U} \leq 0,
\end{align}
where $c_{w,U} = -2\sum_{m=1}^{M-1}\sum_{m'=m+1}^{M} ({\rm\mathbf{w}}_m^{(p)}) ^{H} {\boldsymbol{\Omega}}_{w,U}{\rm\mathbf{w}}_{m'}^{(p)} - \xi_U $ is a constant. Finally, we have the transformed problem w.r.t. $\mathbf{W}$ as
\begin{align} \label{subP2.2}
    \mathop{\max}_{{\rm\mathbf{W}}} \ &  \widetilde{F}_{D,t} ( {\rm\mathbf{W}}) 
    \quad \text{s.t. } 
    \eqref{P1:power_constraint}, \eqref{eqn: ULcons_w_Taylor},
    \end{align}
which is convex and can be solved to obtain the optimal active BS beamforming solution.

\subsubsection{Optimization of Passive RIS Phase-Shifts}

After obtaining the optimal active beamforming solution of $\mathbf{W}^*$, we now consider the subproblem w.r.t. passive beamforming of RIS $\boldsymbol{\Theta}$. Firstly, we employ a vector form of RIS phase shift of $\boldsymbol{\theta} = [e^{j\theta_1},...,e^{j\theta_K}]^{T} \in \mathbb{C}^K$ instead of matrix form of $\boldsymbol{\Theta}$. Similar to $\eqref{subP2}$, the problem $\eqref{P1_4}$ can be reformulated w.r.t. $\boldsymbol{\theta}$ with fixed $\mathbf{W}^*$ as
\begingroup
\allowdisplaybreaks
\begin{subequations} \label{subP1}
    \begin{align}
    \mathop{\max}_{\boldsymbol{\theta}} \ &  \widetilde{F}_{D}^{\theta}({\boldsymbol{\theta}})
    \!=\! \smashoperator[r]{\sum_{m\in \mathcal{M}}} {D}_{S,m}({\boldsymbol{\theta}}) - t_m^{*}({D}_{t,m}({\boldsymbol{\theta}})+ {D}_{C,m}({\boldsymbol{\theta}}) + \sigma^2) \label{subP1:obj_func}\\
    \text{s.t. } 
    & \eqref{P1:phase_constraint}, 
    \quad
    {U}_S(\boldsymbol{\theta}) - \overline{t}_{th,U}\left({U}_I({\boldsymbol{\theta}}) + \sigma^{2} \right) \geq 0, \label{subP1.2: UL constraint}
    \end{align}
\end{subequations}
\endgroup
where pertinent notations w.r.t. $\boldsymbol{\theta}$ are defined as follows:
	${D}_{S,m}({\boldsymbol{\theta}})
	\!=\! (1+r_m^*)\lVert {\rm\mathbf{D}}_m {\rm\mathbf{w}}_m^{*} + {\rm\mathbf{D}}_{t_{m}}{\rm\mathbf{w}}_m^{*(D)}{\boldsymbol{\theta}} \lVert^2$,
	${\rm\mathbf{w}}_{m}^{*(D)} \!=\! {\rm diag}({\rm\mathbf{w}}_m^*,{\rm\mathbf{w}}_m^*,...,{\rm\mathbf{w}}_m^*) \in \mathbb{C}^{N \cdot K \times K}$,
	${D}_{t,m}({\boldsymbol{\theta}})
	\!=\! \sum_{m'\in \mathcal{M}} \lVert {\rm\mathbf{D}}_m {\rm\mathbf{w}}_{m'}^{*} + {\rm\mathbf{D}}_{t_{m}}{\rm\mathbf{w}}_{m'}^{*(D)}{\boldsymbol{\theta}} \lVert^2$,
	$\overline{\rm\mathbf{x}}_U^{(U)} \!=\! {\rm diag}(\overline{\rm\mathbf{x}}_U,\overline{\rm\mathbf{x}}_U,...,\overline{\rm\mathbf{x}}_U)\in \mathbb{C}^{N K \times K}$,
	${D}_{C,m}({\boldsymbol{\theta}})
	\!=\! \lVert {\rm\mathbf{V}}_m \overline{{\rm\mathbf{x}}}_U + {\rm\mathbf{C}}_{t_{m}}\overline{\rm\mathbf{x}}_U^{(U)}{\boldsymbol{\theta}} \lVert^2,
	\quad {\rm\mathbf{d}}_{t}^{(D)} = {\rm diag}({\rm\mathbf{d}}_t,...,{\rm\mathbf{d}}_t) \in \mathbb{C}^{N_t K \times K}$, 
	${\rm\mathbf{d}}_t \!=\! \sum_{m\in \mathcal{M}} {\rm\mathbf{w}}_m^{*} \overline{x}_{D,m} \in \mathbb{C} ^{N_t}$,
	${U}_{S}({\boldsymbol{\theta}})
	\!=\! \lVert {\rm\mathbf{U}}\overline{{\rm\mathbf{x}}}_U+{\rm\mathbf{U}}_t \overline{{\rm\mathbf{x}}}_U ^{(U)} {\boldsymbol{\theta}} \lVert^2$,
	and
	${U}_I({\boldsymbol{\theta}}) 
	\!=\! \lVert {\rm\mathbf{S}}{\rm\mathbf{W}}^{*}\overline{{\rm\mathbf{x}}}_D+{\rm\mathbf{S}}_t {\rm\mathbf{d}}_t^{(D)} {\boldsymbol{\theta}}\lVert^2,$
Therefore, we can have equivalent expressions by moving out the parameter $\boldsymbol{\theta}$ originally within two matrices. Accordinly, by rearranging the above formulas, the objective $\eqref{subP1:obj_func}$ can be rewritten in an quadratic form as 
\begin{equation}
    \begin{aligned} \label{eqn: obj_tran}
    \widetilde{F}_{D}({\boldsymbol{\theta}}) = \sum_{m\in \mathcal{M}} {\boldsymbol{\theta}}^{H} \boldsymbol{\Omega}_{\theta,m}{\boldsymbol{\theta}} - {\boldsymbol{\theta}}^{H} \boldsymbol{\Psi}_{\theta,m}{\boldsymbol{\theta}} + 2\text{Re}\{{\boldsymbol{\zeta}}_{\theta,m}{\boldsymbol{\theta}}\} + c_{\theta,m},
    \end{aligned}
\end{equation}
where 
	$\boldsymbol{\Omega}_{\theta,m} = (1+r_m^*)({\rm\mathbf{w}}_m^{*(D)})^{H} {\rm\mathbf{D}}_{t_m}^{H} {\rm\mathbf{D}}_{t_m} {\rm\mathbf{w}}_m^{*(D)}$,
	$\boldsymbol{\Psi}_{\theta,m} = t_m^* {\rm\mathbf{D}}_{t_m} \sum_{m'\in \mathcal{M}} \left( {\rm\mathbf{w}}_{m'}^{*(D)} ({\rm\mathbf{w}}_{m'}^{*(D)})^{H} \right) {\rm\mathbf{D}}_{t_m}^{H}$,
	${\boldsymbol{\zeta}}_{\theta,m} =  (1+r_m^*) ({\rm\mathbf{w}}_m^*)^{H} {\rm\mathbf{D}}_m^{H}  {\rm\mathbf{D}}_{t_m} {\rm\mathbf{w}}_m^{*(D)} 
    - t_m^* \left( \sum_{m'\in \mathcal{M}} ({\rm\mathbf{w}}_{m'}^*)^{H} {\rm\mathbf{D}}_m^{H}  {\rm\mathbf{D}}_{t_m} {\rm\mathbf{w}}_{m'}^{*(D)} +  \overline{\rm\mathbf{x}}_{U}^{H} {\rm\mathbf{V}}_m^{H} {\rm\mathbf{C}}_{t_m} \overline{{\rm\mathbf{x}}}_U^{(U)} \right)$,
    and
    $c_{\theta,m} = (1+r_m^*)({{\rm\mathbf{w}}_m^{*}})^{H} {\rm\mathbf{D}}_m^{H} {\rm\mathbf{D}}_m {\rm\mathbf{w}}_m^* 
    - t_m^* \left( \sum_{m'\in \mathcal{M}} \left({{\rm\mathbf{w}}_{m'}^*}\right)^{H} {\rm\mathbf{D}}_{m}^{H} {\rm\mathbf{D}}_m {\rm\mathbf{w}}_{m'}^* + \overline{\rm\mathbf{x}}_U^{H} {\rm\mathbf{V}}_m^{H} {\rm\mathbf{V}}_m \overline{\rm\mathbf{x}}_U + \sigma^2 \right)$. Moreover, UL QoS constraint in \eqref{subP1.2: UL constraint} can be rewritten in a quadratic form given by
 \begin{equation}
    \begin{aligned} \label{eqn: cons_tran} {\boldsymbol{\theta}}^{H} \boldsymbol{\Omega}_{\theta,U} {\boldsymbol{\theta}} - {\boldsymbol{\theta}}^{H} \boldsymbol{\Psi}_{\theta,U} {\boldsymbol{\theta}} + 2\text{Re}\{{\boldsymbol{\zeta}}_{\theta,U}{\boldsymbol{\theta}}\} + c_{\theta,U} \geq 0,
    \end{aligned}
\end{equation}
where 
	$\boldsymbol{\Omega}_{\theta,U} = (\overline{\rm\mathbf{x}}_U^{(U)})^{H} {\rm\mathbf{U}}_t^{H} {\rm\mathbf{U}}_t \overline{\rm\mathbf{x}}_U^{(U)}$,
	${\boldsymbol{\zeta}}_{\theta,U} = \overline{\rm\mathbf{x}}_U^{H} {\rm\mathbf{U}}^{H} {\rm\mathbf{U}}_t \overline{\rm\mathbf{x}}_U^{(U)}- \overline{t}_{th,U} \overline{\rm\mathbf{x}}_D^{H} ({{\rm\mathbf{W}}^*})^{H} {\rm\mathbf{S}}^{H} {\rm\mathbf{S}}_t {\rm\mathbf{d}}_t^{(D)}$,
	$\boldsymbol{\Psi}_{\theta,U} = \overline{t}_{th,U} ({\rm\mathbf{d}}_t^{(D)})^{H} {\rm\mathbf{S}}_t^{H} {\rm\mathbf{S}}_t {\rm\mathbf{d}}_t^{(D)}$,
	and
	$c_{\theta,U} = \overline{\rm\mathbf{x}}_U^{H} {\rm\mathbf{U}}^{H} {\rm\mathbf{U}} \overline{\rm\mathbf{x}}_U -  \overline{t}_{th,U} \left( \overline{\rm\mathbf{x}}_D^{H} ({{\rm\mathbf{W}}^*})^{H} {\rm\mathbf{S}}^{H} {\rm\mathbf{S}} {{\rm\mathbf{W}}^*}\overline{\rm\mathbf{x}}_D + \sigma^2 \right)$.
However, we can observe that $\eqref{eqn: obj_tran}$ and $\eqref{eqn: cons_tran}$ both perform convex-concave functions. Therefore, we employ SCA with the first-order Taylor approximation to transform a non-convex function to affine one, where the inequalities are given by ${\boldsymbol{\theta}}^{H} \boldsymbol{\Omega}_{\theta,m} {\boldsymbol{\theta}} \geq 2\text{Re}\{ ({\boldsymbol{\theta}}^{(q)})^{H} {\boldsymbol{\Omega}}_{\theta,m} {\boldsymbol{\theta}}\} - ({\boldsymbol{\theta}}^{(q)})^{H} {\boldsymbol{\Omega}}_{\theta,m} {\boldsymbol{\theta}}^{(q)}$ in \eqref{eqn: obj_tran}, and ${\boldsymbol{\theta}}^{H} \boldsymbol{\Omega}_{\theta,U} {\boldsymbol{\theta}} \geq 2\text{Re}\{ ({\boldsymbol{\theta}}^{(q)})^{H} \boldsymbol{\Omega}_{\theta,U} {\boldsymbol{\theta}}\} - ({\boldsymbol{\theta}}^{(q)})^{H} \boldsymbol{\Omega}_{\theta,U} {\boldsymbol{\theta}}^{(q)}$ in $\eqref{eqn: cons_tran}$, where $\boldsymbol{\theta}^{(q)}$ indicates a constant of RIS solution obtained at the previous iteration $q$. Accordingly, we can attain an SCA-based objective in $\eqref{eqn: obj_tran}$ as $\widetilde{F}_{D,t}({\boldsymbol{\theta}})
    = \sum_{m\in \mathcal{M}} {\boldsymbol{\theta}}^{H} \boldsymbol{\Psi}_{\theta,m}{\boldsymbol{\theta}} + 2\text{Re}\{\overline{\boldsymbol{\zeta}}_{\theta,m}{\boldsymbol{\theta}}\} + \overline{c}_{\theta,m}$,
where $\overline{\boldsymbol{\zeta}}_{\theta,m} = {\boldsymbol{\zeta}}_{\theta,m} + ({\boldsymbol{\theta}}^{(q)})^{H} {\boldsymbol{\Omega}}_{\theta,m} $ 
and 
$\overline{c}_{\theta,m} = c_{\theta,m} -({\boldsymbol{\theta}}^{(q)})^{H} {\boldsymbol{\Omega}}_{\theta,m} {\boldsymbol{\theta}}^{(q)}$. Similarly, with SCA, UL QoS constraint of $\eqref{eqn: cons_tran}$ can be expressed as
\begin{equation} \label{SCA_ULQoS}
    \begin{aligned}
     - {\boldsymbol{\theta}}^{H} \boldsymbol{\Psi}_{\theta,U} {\boldsymbol{\theta}} + 2\text{Re}\{\overline{\boldsymbol{\zeta}}_{\theta,U}{\boldsymbol{\theta}}\} + \overline{c}_{\theta,U} \geq 0,
    \end{aligned}
\end{equation}
where $\overline{\boldsymbol{\zeta}}_{\theta,U} = ({\boldsymbol{\theta}}^{(q)})^{H} \boldsymbol{\Omega}_{\theta,U} + {\boldsymbol{\zeta}}_{\theta,U}$ and $\overline{c}_{\theta,U} = c_{\theta,U} - ({\boldsymbol{\theta}}^{(q)})^{H} \boldsymbol{\Omega}_{\theta,U} {\boldsymbol{\theta}}^{(q)}$. Accordingly, the problem $\eqref{subP1}$ is transformed into
\begin{subequations} \label{subP1.3}
    \begin{align}
    \mathop{\max}_{\boldsymbol{\theta}} \ &  \widetilde{F}_{D,t} ({\boldsymbol{\theta}})
    \label{subP1.3:obj_func}
    \quad \text{ s.t. } 
    \eqref{P1:phase_constraint}, \eqref{SCA_ULQoS}.
    \end{align}
\end{subequations}
Nevertheless, the equality of RIS phase constraint in $\eqref{P1:phase_constraint}$ is non-convex. To address this problem, we adopt a convex-concave programming (CCP) procedure, i.e., $\lvert e^{j\theta_k} \lvert^2 = 1$ can be resolved by employing two equivalent inequalities of $\lvert e^{j\theta_k} \lvert^2 \leq 1 $ and $\lvert e^{j\theta_k} \lvert^2 \geq 1 $. The first inequality is convex due to its circular shape, whilst the second inequality is non-convex. Adopting SCA again for $\lvert e^{j\theta_k} \lvert^2 \geq 1$ yields $2\text{Re}\{ (e^{-j\theta_k})^{(q)} e^{j\theta_k}\} - \lvert (e^{j\theta_k})^{(q)} \lvert^2 \geq 1 \Rightarrow \text{Re}\{ (e^{-j\theta_k})^{(q)} e^{j\theta_k}\} \geq 1$ since $\lvert (e^{j\theta_k})^{(q)} \lvert^2=1$, where $(e^{-j\theta_k})^{(q)}$ indicates the phase solution obtained as a constant at iteration $q$. However, directly applying these tight constraints may lead to non-convergence or an infeasible problem during iterative optimization. Therefore, PCCP is employed to allow optimization from loose to tight constraints, asymptotically approaching the original equality constraint. The optimization problem w.r.t. RIS configuration of $\boldsymbol{\theta}$ can be expressed as
\begingroup
\allowdisplaybreaks
\begin{subequations} \label{subP1.4}
    \begin{align}
    \mathop{\max}_{{\boldsymbol{\theta}},{\rm\mathbf{a}},{\rm\mathbf{b}}} \ &  \widetilde{F}_{D,t}({\boldsymbol{\theta}})- \lambda^{(q)} \sum_{k\in \mathcal{K}} (a_k+b_k)
    \label{subP1.3:obj_func2}\\
    \text{s.t. } 
    & \eqref{SCA_ULQoS}, \quad \left\lvert e^{j\theta_k} \right\lvert^2 \leq 1 + a_k, \quad \forall k\in \mathcal{K},\\
    & \text{Re}\{ (e^{-j\theta_k})^{(q)} e^{j\theta_k}\} \geq 1-b_k, \quad \forall k\in \mathcal{K},
    \end{align}
\end{subequations}
\endgroup
where ${\rm\mathbf{a}} = [a_1,...,a_K]^{T}$ and ${\rm\mathbf{b}} = [b_1,...,b_K]^{T}$ denote the slake variables imposed on the associated constraints of RIS phase shifts. To this end, we have a convex problem $\eqref{subP1.4}$, which can be readily resolved by arbitrary convex optimization tools. The concrete step of the proposed FRIS scheme is described in Algorithm \ref{FRIS}. The whole procedure will be iteratively executed until their respective convergence given by $| \widetilde{F}_{D,t}^{(p)} ( {\rm\mathbf{W}}) - \widetilde{F}_{D,t}^{(p-1)} ( {\rm\mathbf{W}}) | \leq \varrho_w$, $| \widetilde{F}_{D,t}^{(q)} ( {\boldsymbol{\theta}}) - \widetilde{F}_{D,t}^{(q-1)} ( {\boldsymbol{\theta}}) | \leq \varrho_{\theta}$, and $| \widetilde{F}_{D}^{(i)}({\boldsymbol{\Theta}}, {\rm\mathbf{W}}) - \widetilde{F}_{D}^{(i-1)}({\boldsymbol{\Theta}}, {\rm\mathbf{W}})| \leq \varrho$, where $ \varrho_w$, $\varrho_{\theta}$, and $\varrho$ indicate the respective convergence bounds. To elaborate a little further, we select the historical optimal solution if the iteration exceeds their respective upper bounds, i.e., $p > T_w$, $q > T_{\theta}$, and $i > T$.

\begin{algorithm}[t]
\caption{\footnotesize Proposed FRIS Scheme}
\SetAlgoLined
\scriptsize
\DontPrintSemicolon
\label{FRIS}
\begin{algorithmic}[1]
\STATE Set iteration of outer loop $i=1$
\REPEAT [Outer Loop for Lagrangian/Dinkelbach]
\STATE Compute parameters of $r_m^{(i)}$, $t_m^{(i)}$ based on ${\rm\mathbf{W}}^{(i-1)}$, ${\boldsymbol{\theta}}^{(i-1)}$ 
	\STATE Set inner iteration $p=1$ and $\mathbf{W}_{t}^{(0)} = {\rm\mathbf{W}}^{(i-1)}$
	\REPEAT [Inner Loop for BS Beamforming]	
		\STATE Solve problem $\eqref{subP2.2}$ for ${\rm\mathbf{W}}_{t}^{(p)}$ with fixed $\boldsymbol{\theta}^{(i-1)}$; Update $p \leftarrow p+1$
	\UNTIL Convergence of $\mathbf{W}_{t}$
	\STATE Obtain $\mathbf{W}_{t}^{*} = \mathbf{W}_{t}^{(p-1)}$
	\STATE Set inner iteration $q=1$ and $\boldsymbol{\theta}_{t}^{(0)} = \boldsymbol{\theta}^{(i-1)}$, $\kappa\geq 0$, $\lambda^{(0)}=1$ and $\lambda_{\text{max}}$
	\REPEAT [Inner Loop for RIS Beamforming]
		\STATE Solve problem $\eqref{subP1.4}$ for $\boldsymbol{\theta}_{t}^{(q)}$ with fixed ${\rm\mathbf{W}}_{t}^{*}$
		\STATE Update $\lambda^{(q)} = \min \left\{ \kappa \lambda^{(q-1)}, \lambda_{\text{max}} \right\}$; Update $q\leftarrow q+1$
	\UNTIL Convergence of $\boldsymbol{\theta}_t$
	\STATE Obtain $\boldsymbol{\theta}_{t}^{*} = \boldsymbol{\theta}_{t}^{(q-1)}$ and $\left\{{\boldsymbol{\theta}}^{(i)},{\rm\mathbf{W}}^{(i)}\right\} = \left\{{\boldsymbol{\theta}}_t^{*},{\rm\mathbf{W}}_t^{*}\right\}$; Update $i\leftarrow i+1$
\UNTIL Convergence of total solution
\STATE {\bf return} $\left\{{\boldsymbol{\theta}}^{*},{\rm\mathbf{W}}^{*}\right\} = \left\{{\boldsymbol{\theta}}^{(i-1)},{\rm\mathbf{W}}^{(i-1)}\right\}$.
\end{algorithmic}
\end{algorithm}

\section{Performance Evaluation} \label{PER_EVA}

%

We consider Rician channel for all links, i.e., $\mathbf{H}=\sqrt{\frac{1}{PL}}\left( \sqrt{\frac{\rho}{1+\rho}} {\rm\mathbf{H}}_{\text{LoS}} + \sqrt{\frac{1}{1+\rho}} {\rm\mathbf{H}}_{\text{NLoS}} \right)$. $PL=10^{-PL_{dB}/10}$ is the pathloss at $3.5$ GHz frequency \cite{pathloss}, following $PL_{dB} = 38.88 + 22\text{log}_{10}(D)$ (dB), with $D$ as the distance between links of BS-RIS, RIS-UE or BS-UE. Notation $\rho=3$ is the Rician factor. ${\rm\mathbf{H}}_{\text{LoS}}$ indicates the line-of-sight (LoS) channel component, whilst ${\rm\mathbf{H}}_{\text{NLoS}}$ corresponds to non-line-of-sight (NLoS) path, characterized by Rayleigh fading with an exponential distribution having a unit expectation. The simulated scenario is shown in Fig. \ref{scenario}. $d=80$ m, $d_{H}=200$ m are defined as the horizontal distance between BS-RIS and BS-UE, respectively. While, $d_{V}=50$ m indicates the perpendicular distance of BS-RIS. All UL/DL UEs are uniformly distributed in a circular range with a radius of 50 m. $ \varrho_w$, $\varrho_{\theta}$, and $\varrho$ are set to $0.01$. Note that we consider an imperfect RIS with reflection efficiency set to $\beta = \beta_k=0.9,\forall k\in \mathcal{K}$.

\begin{figure}[!t]
	\centering
	\includegraphics[width=3.3in]{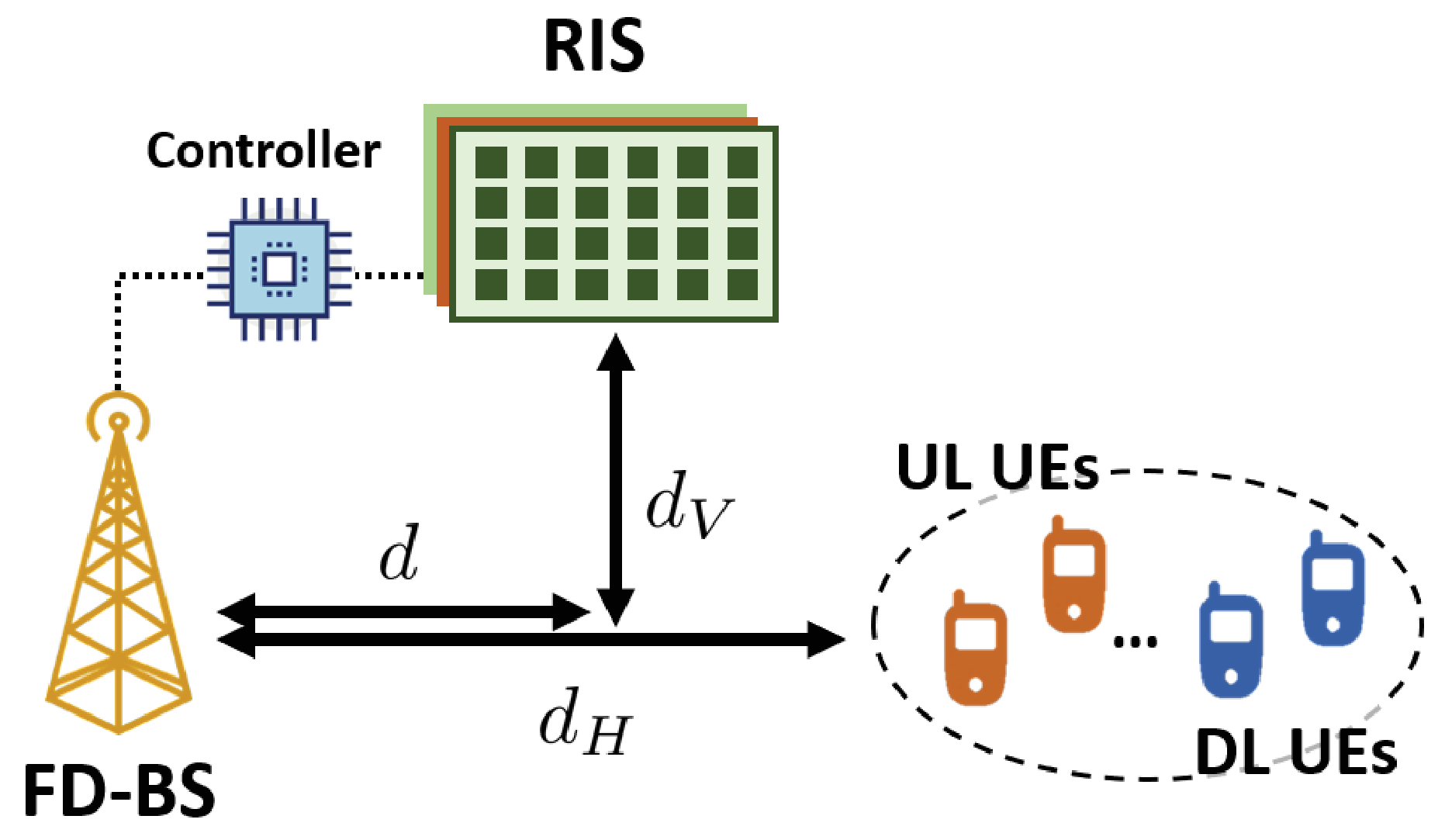}
	\caption{Scenario of RIS deployment.}
	\label{scenario}
\end{figure}
\begin{figure}[!t]
	\centering
	\includegraphics[width=3.3in]{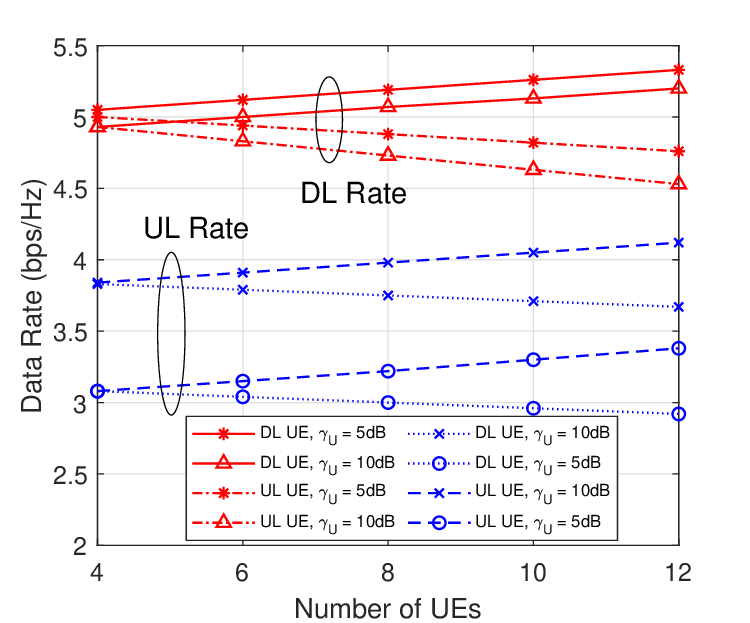}
	\caption{Respective DL/UL Rate w.r.t. different numbers of DL/UL UEs.}
	\label{ULDLrate}
\end{figure}
\begin{figure}[!t]
	\centering
	\includegraphics[width=3.3in]{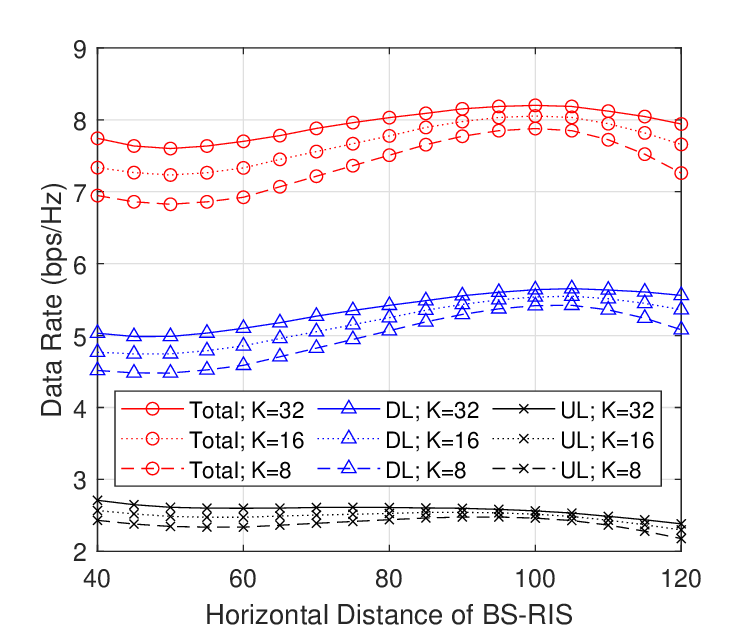}
	\caption{Data rates versus different distances between BS-RIS.}
	\label{distance}
\end{figure}
\begin{figure}[!t]
	\centering
	\includegraphics[width=3.3in]{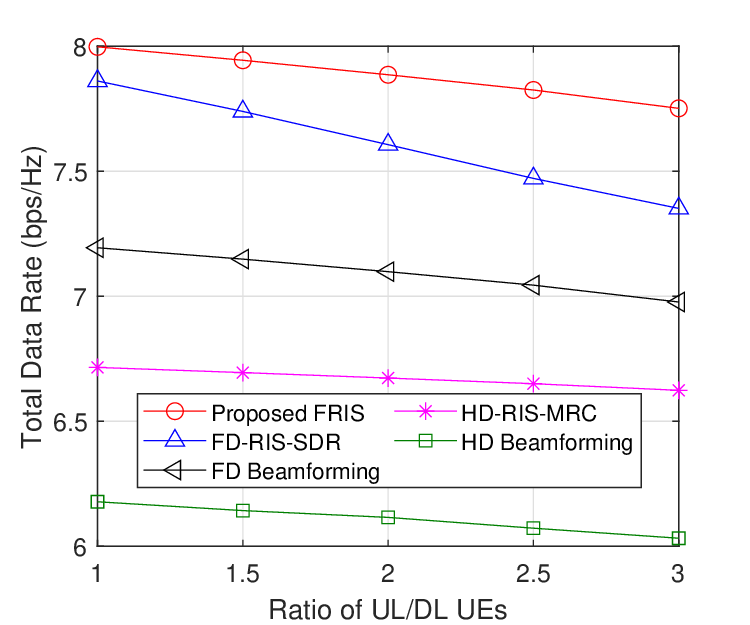}
	\caption{Benchmark comparison.}
	\label{benchmark2}
\end{figure}


Fig. \ref{ULDLrate} depicts respective UL and DL rates of FRIS w.r.t different numbers of DL and UL UEs with $N_t=N_r=6$, $K=16$ and $\gamma_{U} \in \{5, 10\}$ dB. We consider $N=4$ UL UEs when evaluating different numbers of DL UEs, and vice versa for $M=4$ DL UEs in different numbers of UL UEs. We can observe that $\gamma_U=5$ provides higher degree of freedom for DL UEs, corresponding to higher DL rate compared to  $\gamma_U=10$ dB. Also, higher DL rate is achieved with more DL UEs. On the contrary, DL rate decreases under the increasing number of UL UEs owing to severs CCI from UL. Moreover, more DL UEs will induce higher SI reflecting from the RIS, leading to a decrease of UL rate. To elaborate a little further, it exhibits a higher DL rate than UL one, as the system prioritizes the DL objective, while guaranteeing the minimum necessary UL QoS.

Fig. \ref{distance} studies the impact of RIS deployment in FRIS w.r.t. different horizontal distances between BS-RIS $[40,120]$ m, with $N_r=N_t=6$, $M=N=4$ $\gamma_U = 5$ dB, and $K\in \{8,16,32\}$. We can observe three trends when $d \leq 100$ m, $d = 100$ m, and $d \geq 100$ m. When $d \leq 100$, it shows benefits deploy RIS near the BS thanks to more reflected signal power and cancellation of interference. Moreover, they perform a convex shape with minimum rate at $d=50$ m, as UL signals can be weak before it reaches RIS when $d\leq 50$ m. As for $d\in[50,100]$ m, increasing the deployment distance $d$ provides a shorter distance of BS-RIS and RIS-UE, i.e., lower pathloss. The optimal deployment of RIS can be achieved when the distance $d=100$ m between BS-RIS, which corresponds to the midpoint between the BS-RIS and RIS-UE. At this optimal location, it can effectively both mitigate interference and enhance the desired signal power with significant rate improvement. When $d$ goes beyond $100$ m, the pathloss will dominate the result, which results in a decreasing data rate performance. To elaborate a little further, it exhibits a smooth UL rate, as the system is more focused on DL optimization while ensuring the minimum requirement of UL QoS. By appropriately deploying the RIS, it becomes feasible to achieve adequate support for both UL/DL UEs, even with a smaller size of RIS.

In Fig. \ref{benchmark2}, we compare the proposed FRIS scheme to the existing methods: (1) \textbf{FD-RIS-SDR} applies semi-definite relaxation (SDR) in \cite{SDR} for BS beamforming; while RIS part adopts PCCP. (2) \textbf{FD Beamforming} in \cite{FD_oblique} applies a oblique projection technique but without RIS deployment. (3) \textbf{HD-RIS-MRC} designs based on maximum ratio combining (MRC) in both BS/RIS beamforming \cite{BM_MRC} under downlink transmission in HD systems. (4) \textbf{HD Beamforming} relies solely on BS beamforming using \cite{BM_MRC} without the aid of RIS in HD systems. We evaluate different ratios of number of UL UEs to that of DL UEs. We can observe the worst data rate in HD without RIS owing to insufficient channel diversity. With the aid of RIS, HD-RIS-MRC has around $10\%$ rate improvement compared to that without RIS case. Furthermore, FD Beamforming can utilize full spectrum, with enhanced $15\%$ rate compared to that of HD systems without RIS. Upon utilizing SDR approximation in FD-RIS-SDR, it attains a rate improvement of around $6\%$ compared to FD Beamforming. Moreover, the proposed FRIS scheme amends dropping rank-1 issue in SDR, which achieves the highest rate among the other benchmarks.

\section{Conclusions} \label{con}
In this paper, we have proposed FRIS scheme for optimizing active BS and passive RIS beamforming in RIS-enabled FD transmissions. A series of transformation is conducted for obtaining a convex problem as well as dealing with constraint on RIS phase-shifts. In simulations, it strikes a compelling balance between UL/DL rates with different numbers of UEs and QoS. We can also observe the optimal rate from RIS deployment with higher number of RIS elements thanks to increasing channel diversity. Moreover, the proposed FRIS has achieved the highest rate compared to the other approximation method, conventional beamforming techniques, HD systems, and deployment without RIS.

\bibliographystyle{IEEEtran}

\bibliography{IEEEabrv,myReference}

\end{document}